\documentclass[a4paper,11pt]{scrartcl}

\usepackage[T1]{fontenc}
\usepackage[utf8]{inputenc}
\usepackage[english]{babel}
\usepackage{amsthm}
\usepackage{amsmath}
\usepackage{amssymb}
\usepackage{xspace}
\usepackage{mathtools}
\usepackage{verbatim}
\usepackage{appendix}
\usepackage{microtype}
\usepackage{hyperref}
\usepackage[labelfont=bf]{caption}


\newcommand{\cM}{\mathcal{M}}
\newcommand{\sat}{\textnormal{sat}}
\newcommand{\vbl}{\textnormal{vbl}}
\newcommand{\bestconstant}{0.617}

\theoremstyle{plain}
\newtheorem{theorem}{Theorem}
\newtheorem{lemma}[theorem]{Lemma}
\newtheorem{proposition}[theorem]{Proposition}

\title{Solving and Sampling with Many Solutions: Satisfiability and Other Hard Problems\footnote{This work started at the 2016 Gremo Workshop on Open Problems (GWOP), on June 6-10 at St. Niklausen, OW, Switzerland. }}
\author{Jean Cardinal$^1$, Jerri Nummenpalo$^2$, and Emo Welzl$^3$}
\date{
\small{
\textbf{1} Université Libre de Bruxelles (ULB), Computer Science Department, Brussels, Belgium \\ \texttt{jcardin@ulb.ac.be}\\ [0.6em] 
\textbf{2} ETH Zurich, Department of Computer Science, Zurich, Switzerland \\ \texttt{njerri@inf.ethz.ch}\\ [0.6em] 
\textbf{3} ETH Zurich, Department of Computer Science, Zurich, Switzerland \\ \texttt{emo@inf.ethz.ch}
}}

\begin{document}

\maketitle

\begin{abstract}
We investigate parameterizing hard combinatorial problems by the size of the solution set compared to all solution candidates.
Our main result is a uniform sampling algorithm for satisfying assignments of 2-CNF formulas that runs in expected time $O^*(\varepsilon^{-\bestconstant})$ where $\varepsilon$ is the fraction of assignments that are satisfying.
This improves significantly over the trivial sampling bound of expected~$\Theta^*(\varepsilon^{-1})$, and on all 
previous algorithms whenever $\varepsilon = \Omega(0.708^n)$.
We also consider algorithms for 3-SAT with an $\varepsilon$ fraction of satisfying assignments, and prove that it can be solved in $O^*(\varepsilon^{-2.27})$ deterministic time, and in $O^*(\varepsilon^{-0.936})$ randomized time. 
Finally, to further demonstrate the applicability of this framework, we also explore how similar techniques can be used for vertex cover problems.
\end{abstract}

\section{Introduction}
\label{sec:introduction}

In order to cope with the computational complexity of combinatorial optimization and satisfiability problems without sacrificing correctness guarantees, one can consider a family of instances for which a certain parameter is bounded, and analyze the complexity of algorithms as a function of this parameter. While it is now commonplace in combinatorial optimization to define the parameter as the {\em size} of a solution, we here consider computationally hard problems parameterized by the {\em number} of solutions. More precisely, we will consider satisfiability problems in which we are promised that a fraction at least $\varepsilon$ of all possible assignments are satisfying, and graph covering problems in which a fraction at least $\varepsilon$ of all vertex subsets of a certain size are solutions. 

Counting and sampling solutions to CNF formulas and more generally to CSP formulas has important practical applications. 
For example, in verification and artificial intelligence~\cite{naveh2007constraint}; and  Bayesian inference~\cite{sang2005performing}.
Recent algorithmic developments have made possible practical algorithms that can tackle industrial scale problems \cite{meel2016constrained}.

In contrast to that line of work we focus on the exact complexity of sampling, in particular to sampling solutions for 2-CNF formulas, and show that we can significantly improve on the \emph{trivial sampling algorithm} that repeatedly samples uniformly in the search space and terminates after $\varepsilon^{-1}$ steps on average.
A few previous works have also considered satisfiability problems under the promise that there are many solutions, most notably from Hirsch~\cite{hirsch1998fast}, and more recently from Kane and Watanabe~\cite{kane2013short}. 
Their focus has been on deterministic algorithms and we extend their work while also adding the consideration of randomized algorithms for $k$-SAT.

Before detailing our contributions more precisely, we briefly summarize the current state of knowledge regarding this family of questions.

\subsection{Background and previous work on satisfiability}

Hirsch~\cite{hirsch1998fast} developed a deterministic algorithm that finds a satisfying assignment for a $k$-CNF formula $F$ with an $\varepsilon$ fraction of satisfying assignments in time $O^*(\varepsilon^{-\delta_k})$ where $(\delta_k)_{k =2}^{\infty}$ is a positive increasing sequence defined by the roots of the characteristic polynomials of certain recurrence relations. The constant obtained for $k=3$ is $\delta_3 \approx 7.27$.
The main idea in his algorithm is that such formulas $F$ have \emph{short implicants} which are satisfying assignments that need to fix only few variables --- in this case only $O(\log \varepsilon^{-1})$ many --- and such assignments can be found relatively fast with a branching algorithm.
Trevisan~\cite{trevisan2004note} proposed a similar  algorithm to that of Hirsch but with an explicit running time of $O^*(\varepsilon^{-(\ln 4)k2^k})$. Although his algorithm is slightly simpler, the performance guarantees, at least for small $k$, are worse.

Kane and Watanabe~\cite{kane2013short} looked at general CNF formulas in a similar setting.
They assume that $\varepsilon \geq 2^{-n^{\delta}}$, that the number of clauses is bounded by $n^{1+\delta'}$ and that $\delta + \delta' < 1$.
Under these conditions they show that the formula has a short implicant that only fixes a linear fraction of the variables and they provide a $O^*(2^{n^\beta})$ time algorithm for finding a solution with $\beta < 1$.

Classical derandomization tools naturally apply in this context. For arbitrary CNF formulas on $n$ variables with $\varepsilon 2^n$ satisfying assignments, one can obtain a deterministic algorithm by using a pseudorandom generator that $\varepsilon$-fools depth-2 circuits. A result by De et al.~\cite{DeETT10} provides such pseudorandom generators with seed length $O\left(\log n + \log^2 \frac{m}{\varepsilon}\log\log \frac{m}{\varepsilon}\right)$. By enumerating over all seeds, we obtain a running time of $O^*\left(\left( \frac{n}{\varepsilon}\right)^{c\cdot \log {\frac{n}{\varepsilon}}}\right)$ for some constant $c$ (assuming there are $\mathrm{poly} (n)$ clauses). A recent result of Servedio and Tan improves this running time to $n^{\tilde{O}(\log\log n)^2}$ for any $\varepsilon \geq 1/\mathrm{poly}\log(n)$~\cite{ST16}.

We let \emph{Sample-2-SAT} denote the problem of sampling exactly and uniformly a satisfying assignment. 
Due to self-reducibility of satisfiability, any algorithm for the counting problem \#2-SAT can be used to solve Sample-2-SAT with only a multiplicative polynomial loss in runtime.
In fact, so far the best algorithm for Sample-2-SAT is Wahlstr{\"o}m's \#2-SAT algorithm \cite{wahlstrom2008tighter} that runs in time $O(1.238^n)$.
In contrast to the exponential time algorithms, 2-SAT can be solved in linear time with the classical algorithm of Aspvall et al.~\cite{aspvall1979linear}.
We note that while Sample-2-SAT is between 2-SAT and \#2-SAT in complexity, under the assumption $RP \not= NP$ it is not possible to uniformly or even almost uniformly sample satisfying assignments in polynomial time.
We can use a simple threefold reduction to prove this:
\begin{itemize}
\item  The constraints for an independent set in a graph can be modeled as a 2-SAT formula. Therefore a polynomial time algorithm for Sample-2-SAT would give a polynomial time algorithm for \emph{Sample-IS}. (sampling uniformly among independent sets of any size). 
The same holds for approximate versions of the problems.
\item Such sampling algorithms would yield a fully polynomial randomized approximation scheme (FPRAS) for \#IS. See for example the article of Jerrum et al.~\cite{jerrum1986random}. 
\item Lastly, such an FPRAS exists only if $RP = NP$. For details see for example the book by Jerrum~\cite[Chapter 7, Proposition 7.7]{jerrum2003counting}.
\end{itemize}

Even when relaxing Sample-2-SAT to almost uniform sampling, the best algorithm is still the one based on Wahlstr{\"o}m's counting algorithm.
This is in contrast to $k$-CNF formulas with $k \geq 3$ which have an exponential gap between exact and almost uniform sampling. 
More precisely, the gap is between exact and approximate counting.
See Schmitt and Wanka~\cite{schmitt2013exploiting} for a table of the best algorithms.

\subsection{Our results}

In Section~\ref{sec:extending_hirsch} we recall Hirsch's~\cite{hirsch1998fast} algorithm for finding a satisfying assignment for a $k$-CNF $F$ with a fraction $\varepsilon$ of satisfying assignments. We slightly generalize his analysis to also cover improved branching rules for $k$-SAT. The resulting deterministic algorithms have running times of $O^*(\varepsilon^{-\lambda_k})$ for some positive increasing sequence $(\lambda_k)_{k=2}^{\infty}$, where for instance $\lambda_3 \leq 2.27$.
We demonstrate how similar techniques can be used for finding vertex covers and we give a deterministic algorithm running in time sublinear in $\varepsilon^{-1}$ for instances of $k$-vertex cover with at least $\varepsilon \binom{n}{k}$ solutions and $k$ bounded by some fraction of $n$. 

In Section~\ref{sec:rand_alg} we prove our main result, Theorem~\ref{thm:Sample-2-SAT_epsilon}, which describes an algorithm for Sample-2-SAT that runs in expected time $O^*(\varepsilon^{-\bestconstant})$. 
It therefore improves on the algorithm based on Wahlstr{\"o}m's algorithm \cite{wahlstrom2008tighter} when $\varepsilon = \Omega(0.708^n)$, or equivalently when $F$ has $\Omega(1.415^n)$ satisfying assignments.
We leave it as an open problem to decide whether sampling solutions to 3-CNF formulas can be done in time $O^*(\varepsilon^{-\delta})$ with $\delta < 1$ and discuss why the 2-CNF case does not generalize.
In Proposition~\ref{prop:simple_randomized_3-SAT} we show how to solve 3-SAT in time $O(\varepsilon^{-0.936}(m+n))$ using similar ideas.

\subsection{Notation}

For a Boolean variable $x$ we denote its \emph{negation} by $\bar{x}$ and for a set $V$ of Boolean variables let $\overline{V}$ be the set of negated variables. 
A \emph{literal} is either a Boolean variable or its negation and in the former case we call the literal \emph{positive} and in the latter we call it \emph{negative}.
We think of a \emph{CNF formula}, or simply a \emph{formula}, $F$ over a variable set $V$ as a set $F = \{C_1,C_2,\ldots,C_m\}$ of \emph{clauses} where each clause $C_i \subset V\cup \overline{V}$ is a set of literals without both $x$ and $\bar{x}$ in the same clause for any variable $x \in V$. 
By a $k$-CNF formula and by a $(\leq k)$-CNF we denote CNF formulas in which every clause has cardinality exactly $k$ or at most $k$, respectively.
We let $\vbl(F) \subseteq V$ denote the set of variables that appear in $F$ either as a positive or negative literal. 
The \emph{empty formula} is denoted by $\{\}$ and the \emph{empty clause} by $\square$. 
An \emph{assignment} to the variables in the formula $F$ is a function $\alpha : V \rightarrow \{0,1\}$ and it is said to \emph{satisfy} $F$ if every clause $C \in F$ is satisfied, namely, if the clause contains a literal whose value is set to 1 under the assignment. 
A satisfying assignment is also called a \emph{solution}.
The empty formula is satisfied by any assignment to the variables and the empty clause by none. 
The set of all satisfying assignments of a formula $F$ over $V$ is denoted $\sat_V(F)$, and we omit the subscript $V$ when it is clear from the context.
A \emph{partial assignment} to $F$ is a function $\beta : W \rightarrow \{0,1\}$ with $W \subseteq V$ and we let $F^{[\beta]}$ be the formula over the variables $V\setminus W$ which is attained from $F$ by removing each clause of $F$ that is satisfied under $\beta$ and then removing all literals assigned to $0$ from the remaining clauses. 
If $u \in V\cup \overline{V}$ is a literal and $i \in \{0,1\}$ we let $F^{[u \mapsto i]}$ denote $F^{[\beta]}$ where $\beta$ is the partial assignment that maps only $u$ to $i$.
By \emph{unit clause reduction} we refer to the process of repeatedly setting variables to satisfy the unit clauses until finishing the process by exhausting the unit clauses or finding the empty clause.

All the logarithms are in base 2 unless noted otherwise.

\section{Deterministic algorithms and Hirsch's method}
\label{sec:extending_hirsch}

In this section we consider Hirsch's method~\cite{hirsch1998fast} for finding a satisfying assignment to a $k$-CNF formula, and extend the analysis to accommodate any branching rule.

We first briefly recall basic definitions on branching algorithms.
A \emph{complexity measure} $\mu$ is a function that assigns a nonnegative value $\mu(F)$ to every instance $F$ of some particular problem. 
Given a problem and a complexity measure $\mu$ for it, we say that an algorithm correctly solving the problem is a \emph{branching algorithm} (with respect to $\mu$) if for every instance $F$ the algorithm computes a list $(F_1,\ldots,F_t)$ of instances of the same problem, recursively solves the $F_i$'s, and finally combines the results to solve $F$. 
Finding the list $(F_1,\ldots,F_t)$ and recursively solving each of them is called a \emph{branching}.
Letting $b_i = \mu(F) - \mu(F_i)$ we call the vector $(b_1,\ldots,b_t)$ the \emph{branching vector} associated to the branching. 
Lastly, the \emph{branching number} $\tau(b_1,\ldots,b_t)$ is defined as the smallest positive solution of the equation $\sum_{i=1}^t x^{-b_i} = 1$.
If $\lambda$ is the largest branching number of any possible branching in the algorithm and $T(F)$ is the time used to find the branching and to combine the results after the recursive calls, then the running time of the algorithm can be bounded by $O(T(F)\lambda^{\mu(F)})$.

Following Hirsch~\cite{hirsch1998fast}, we consider a {\em breadth-first} version of such a branching algorithm, taking a $k$-CNF Boolean formula $F$ as input. 
We use the number of variables as a measure, and branch on partial assignments $\beta_i$, each fixing exactly $b_i$ variables.
The set $\Phi_{\ell}$ in the algorithm below eventually contains the formulas constructed from input $F$ after fixing exactly $\ell$ variables.
\begin{enumerate}
\item set $\ell \leftarrow 0$, $\Phi_0 \leftarrow \{F\}$, and $\Phi_{\ell}\leftarrow\emptyset$ for all $\ell >0$.
\item\label{bfalg:rec} if $\{\}\in \Phi_{\ell}$, then stop and return the so far fixed variables 
\item for each $F\in \Phi_{\ell}$ such that $\square \not\in F$:
\begin{enumerate}
\item find a collection of $t$ partial assignments of the form $\beta_i : W_i\to \{0,1\}$, where $W_i\subseteq \vbl(F)$
\item for each $i\in [t]$:
\begin{enumerate}
\item $\Phi_{\ell + b_i}\leftarrow\Phi_{\ell + b_i} \cup \{ F^{[\beta_i]} \}$
\end{enumerate}
\end{enumerate}
\item $\ell\leftarrow \ell +1$; if $\ell\leq n$ then go to step \ref{bfalg:rec}
\end{enumerate}

For this algorithm to be correct, the partial assignments in 3a have to of course be chosen according to a correct branching rule.
The complete collection $\Phi_{\ell}$ can be seen as a collection of nodes of the search tree of the 
recursive algorithm, and is referred to as the $\ell$th {\em floor} of the tree.
The following lemma holds~\cite{hirsch1998fast}.
\begin{lemma}
\label{lem:floor}
$|\Phi_{\ell} |\leq \lambda^{\ell}$ where $\lambda$ is the maximum branching number of the recursion tree.
\end{lemma}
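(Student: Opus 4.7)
The plan is to set up a potential-function argument on the recursive search tree induced by the algorithm. Formally, I would interpret the breadth-first procedure as building a rooted tree $T$: the root corresponds to the input formula $F$ at floor $0$, and each time step 3 branches a formula at floor $\ell$ with branching vector $(b_1,\ldots,b_t)$, the corresponding node gets $t$ children, sitting at floors $\ell+b_1,\ldots,\ell+b_t$ respectively. Since $\lambda$ is the maximum branching number occurring anywhere in the recursion, the definition of the branching number gives
\[
\sum_{i=1}^{t}\lambda^{-b_i}\;\le\;1
\]
for every branching that happens.

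Next I would assign to each node $v$ in $T$ the weight $w(v)=\lambda^{-\mathrm{floor}(v)}$, and prove by induction on the construction of $T$ that $\sum_{v \in \mathrm{leaves}(T)} w(v) \le 1$ at every stage. The base case is immediate since the tree initially consists only of the root, whose weight is $\lambda^0 = 1$. For the inductive step, expanding a leaf $v$ at floor $\ell$ into children at floors $\ell+b_i$ changes the total leaf-weight by
\[
-\lambda^{-\ell}\;+\;\sum_{i=1}^{t}\lambda^{-(\ell+b_i)}\;=\;\lambda^{-\ell}\!\left(-1+\sum_{i=1}^{t}\lambda^{-b_i}\right)\;\le\;0,
\]
so the invariant is preserved.

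Finally, to transfer this into the stated bound on $|\Phi_\ell|$, I would freeze the algorithm at the moment right before step 3 is executed on formulas at floor $\ell$. By the breadth-first schedule, all insertions into $\Phi_\ell$ coming from branchings at floors $<\ell$ have already occurred, so every element of $\Phi_\ell$ appears as a leaf of the current search tree at floor exactly $\ell$. Counting contributions with multiplicity, the number of such leaves is at least $|\Phi_\ell|$ (distinct formulas may coincidentally arise along several branches, in which case $\Phi_\ell$ deduplicates them, which only makes the count smaller). Combining this with the leaf-weight invariant yields
\[
|\Phi_\ell|\cdot \lambda^{-\ell}\;\le\;\sum_{v\in \mathrm{leaves}(T)} w(v)\;\le\;1,
\]
and hence $|\Phi_\ell|\le \lambda^{\ell}$.

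The argument is essentially bookkeeping, so there is no serious obstacle. The only delicate point is the distinction between the multiset of nodes produced by the recursion tree and the deduplicated set $\Phi_\ell$ that the algorithm actually stores; being careful that deduplication only helps is the bit one must not overlook.
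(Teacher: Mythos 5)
Your proof is correct. The paper itself does not prove Lemma~\ref{lem:floor} --- it simply cites Hirsch and moves on --- and your Kraft-inequality-style weighting argument (assigning weight $\lambda^{-\ell}$ to a node at floor $\ell$ and showing the total leaf weight never exceeds $1$) is the standard way to establish it; the points you single out, namely that $\lambda\ge\tau(b_1,\ldots,b_t)$ implies $\sum_i\lambda^{-b_i}\le 1$ by monotonicity, and that deduplication in the set $\Phi_\ell$ only decreases the count, are exactly the right ones to check.
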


The following result was proved by Hirsch in the special case of the simple
Monien-Speckenmeyer algorithm \cite{monien1985solving}, in which the branching vector 
was $(1,2,\ldots ,k)$. We generalize it to arbitrary branching vectors.

\begin{theorem}
\label{thm:branching_algorithm}
Consider a $k$-CNF formula $F$ with $n$ variables and $m$ clauses, and suppose it has
at least $\varepsilon 2^n$ satisfying assignments. 
Then any breadth-first branching algorithm for $k$-SAT
with maximum branching number $\lambda_k < 2$ runs in time $O^*(\varepsilon^{-B})$ 
on this instance, where $B := 1/(\log_{\lambda_k}2 -1)$.
\end{theorem}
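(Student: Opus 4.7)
The plan is to adapt Hirsch's double-bound argument, replacing the specific Monien--Speckenmeyer branching vector $(1,2,\ldots,k)$ by the generic bound supplied by Lemma~\ref{lem:floor}. The idea is to sandwich the size of each floor $|\Phi_\ell|$ between an upper bound driven by the branching number $\lambda_k$ and a lower bound driven by the abundance of satisfying assignments; together these determine the latest floor at which the algorithm can still be running.

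The upper bound is handed over by Lemma~\ref{lem:floor}: $|\Phi_\ell|\le \lambda_k^\ell$. For the lower bound, each $F'\in\Phi_\ell$ is obtained from $F$ by applying a partial assignment that fixes $\ell$ variables, so it has $n-\ell$ free variables and thus $|\sat(F')|\le 2^{n-\ell}$. Correctness of the branching rule ensures that every satisfying assignment of $F$ restricts to a satisfying assignment of at least one $F'\in\Phi_\ell$, so
\[
|\Phi_\ell|\cdot 2^{n-\ell}\ \ge\ \sum_{F'\in\Phi_\ell}|\sat(F')|\ \ge\ |\sat(F)|\ \ge\ \varepsilon\,2^n,
\]
which rearranges to $|\Phi_\ell|\ge \varepsilon\,2^\ell$ so long as the algorithm has not yet halted. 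Combining both bounds yields $(\lambda_k/2)^\ell\ge\varepsilon$, and since $\lambda_k<2$ this forces $\ell\le \log(1/\varepsilon)/\log(2/\lambda_k)$. Hence the algorithm must encounter the empty formula by floor $\ell^\ast := \lceil \log(1/\varepsilon)/\log(2/\lambda_k)\rceil$, with total work bounded, after absorbing the polynomial per-node costs of branching and bookkeeping into $O^*$, by $O^*\!\left(\sum_{\ell\le \ell^\ast}|\Phi_\ell|\right)=O^*(\lambda_k^{\ell^\ast})$. The identity
\[
\lambda_k^{\ell^\ast}\ =\ \varepsilon^{-\log \lambda_k / \log(2/\lambda_k)}\ =\ \varepsilon^{-1/(\log_{\lambda_k} 2 - 1)}\ =\ \varepsilon^{-B}
\]
then closes the argument.

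The only step that is more than bookkeeping is the lower bound $\sum_{F'\in\Phi_\ell}|\sat(F')|\ge|\sat(F)|$: it relies on \emph{completeness} of the branching rule --- every satisfying assignment of the parent must be inherited by at least one child --- which is implicit in correctness of the algorithm but worth stating explicitly, since the theorem is quantified over arbitrary correct branching rules with $\lambda_k<2$. Everything else reduces to the short logarithm computation carried out above and to recalling that each breadth-first step does polynomial work per node in $\Phi_\ell$.
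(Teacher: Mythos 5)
Your overall strategy is the same as the paper's: sandwich the floor index between the upper bound $|\Phi_\ell|\le\lambda_k^\ell$ of Lemma~\ref{lem:floor} and a lower bound coming from the $\varepsilon 2^n$ satisfying assignments, then convert the resulting bound on $\ell$ into $\varepsilon^{-B}$; your exponent computation at the end is correct. However, the lower-bound step contains a genuine error: the inequality $\sum_{F'\in\Phi_\ell}|\sat(F')|\ge|\sat(F)|$ is false for general branching vectors. A branching with $b_i\ge 2$ sends a child of a node on floor $\ell'$ directly to floor $\ell'+b_i$, possibly skipping floor $\ell$ altogether, so a satisfying assignment of $F$ need not extend the partial assignment of any node of $\Phi_\ell$. (Concretely, if a node branches with vector $(1,2)$ into $\beta_1:x\mapsto 1$ and $\beta_2:x\mapsto 0,\ y\mapsto 1$, the solutions with $x=0$ are represented two floors down but not one floor down.) The issue is not ``completeness'' of the rule, which you correctly identify as needed, but that the single floor $\Phi_\ell$ is not a cut of the search tree.

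The repair is exactly what the paper does: after floors $0,\dots,\ell-1$ have been processed, the set of \emph{unprocessed} nodes (those whose parent lies on floor $\ell-1$ or above) does cover every remaining satisfying assignment, and these nodes live on floors $\ell,\ell+1,\dots$. Bounding the assignments they cover by $\sum_{j\ge 0}\lambda^{\ell+j}2^{n-\ell-j}=\lambda^\ell 2^{n-\ell}\sum_{j\ge 0}(\lambda/2)^j\le c_\lambda\lambda^\ell 2^{n-\ell}$ with $c_\lambda=2/(2-\lambda)$ yields $(\lambda/2)^\ell\ge\varepsilon/c_\lambda$ in place of your $(\lambda/2)^\ell\ge\varepsilon$; the extra factor $c_\lambda^{B}$ is a constant absorbed into the $O^*$, so your final bound survives unchanged. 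Note that the hypothesis $\lambda_k<2$ is then used twice: once, as in your write-up, to make $\log(2/\lambda_k)$ positive, and once more to make the geometric series above converge.
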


\begin{proof}
After the $(\ell-1)$th step, we created all nodes of the tree in the $\ell$th floor, and all nodes 
that have a parent in the $(\ell -1)$th floor or above. There are at most $2^{n-\ell -j}$ 
assignments for each node of the $(\ell +j)$th floor. Using Lemma~\ref{lem:floor}, we have that the total number of assignments for the 
remaining nodes is at most
\begin{equation*}
\sum_{j=0}^n \lambda^{\ell +j} 2^{n-\ell -j} 
= \lambda^{\ell} 2^{n-\ell}  \sum_{j=0}^n \lambda^j 2^{-j} 
 \leq  c_{\lambda} \lambda^{\ell} 2^{n-\ell} ,
\end{equation*}
for the constant $c_{\lambda}=2/(2-\lambda )$. Note that we use $\lambda < 2$ which holds for any nontrivial branching rule.
Because the algorithm has not terminated yet, all the $\varepsilon 2^n$ assignments are still to be found and therefore we can bound $\ell$ by
\begin{eqnarray*}
c_{\lambda} \lambda^{\ell}2^{n-\ell} & \geq & \varepsilon 2^n\\
(\lambda / 2)^{\ell} & \geq & \varepsilon / c_{\lambda} \\
\ell & \leq & \log_{\lambda / 2} (\varepsilon / c_{\lambda}). \\
\end{eqnarray*}
We then have 
\begin{eqnarray*}
\lambda^{\ell} & = & \lambda^{\log_{\lambda / 2} (\varepsilon / c_{\lambda})} \\
 & = &  \lambda^{\log_{\lambda} (\varepsilon / c_{\lambda}) / \log_{\lambda} (\lambda / 2)} \\
 & = &  (\varepsilon / c_{\lambda})^{1 / \log_{\lambda} (\lambda / 2)} \\
 & = &  (c_{\lambda} / \varepsilon)^B ,
\end{eqnarray*}
and the total number of nodes is at most $\sum_{i=0}^{\ell} \lambda^i  =  O(\lambda^{\ell})$.
\end{proof}

To get concrete bounds from Theorem~\ref{thm:branching_algorithm} it remains to find good branching rules for $k$-SAT. The improved algorithm by Monien and Speckenmeyer~\cite{monien1985solving} for $k$-SAT uses the notion of autarkies and the branching vectors appearing in the algorithm are $(1)$ and $(1,2,\ldots,k-1)$ of which the latter has the worse branching number. This directly yields the following result for $k=3$.

\begin{theorem}
\label{thm:branching_algorithm_3_sat}
Given a $3$-CNF formula $F$ on $n$ variables and an $\varepsilon > 0$ with the guarantee that $|\sat(F)| \geq \varepsilon 2^n$,
one can find a satisfying assignment for $F$ in deterministic time $O^*\left(\varepsilon^{-2.27}\right)$.
\end{theorem}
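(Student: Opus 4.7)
The plan is to apply Theorem~\ref{thm:branching_algorithm} to a concrete $3$-SAT branching algorithm whose maximum branching number is as small as possible. The natural candidate is the autarky-based Monien--Speckenmeyer algorithm~\cite{monien1985solving} already singled out in the preceding discussion: on a $k$-CNF input its branching vectors are $(1)$ and $(1,2,\ldots,k-1)$, which specialize to $(1)$ and $(1,2)$ when $k=3$. First I would verify that this rule fits into the breadth-first framework of Section~\ref{sec:extending_hirsch}, taking the complexity measure $\mu$ to be the number of free variables: at each recursive call one either branches trivially on an autarky (vector $(1)$), or branches on a shortest surviving clause using the two non-autarky partial assignments (vector $(1,2)$).

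The worst-case branching vector is therefore $(1,2)$, whose branching number $\tau(1,2)$ is the positive root of $x^{-1}+x^{-2}=1$, equivalently of $x^{2}-x-1=0$, namely the golden ratio $\phi=(1+\sqrt{5})/2\approx 1.618$. Since $\phi<2$, the hypothesis of Theorem~\ref{thm:branching_algorithm} is met with $\lambda_3=\phi$, so its conclusion yields a running time of $O^{*}(\varepsilon^{-B})$ where $B=1/(\log_{\phi}2-1)$. A short numerical check using $\log_{\phi}2=\ln 2/\ln\phi$ gives $B\approx 2.27$, which is the claimed bound. Polynomial factors from locating a shortest clause, detecting autarkies, and performing unit clause reduction are absorbed into the $O^{*}$ notation.

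Since Theorem~\ref{thm:branching_algorithm} already carries the main combinatorial content, the proof is essentially a direct plug-and-play application, and I do not anticipate a serious obstacle. The only points requiring a little care are to check that the autarky branch never dominates --- immediate, since $\tau(1)=1<\phi$ --- and to verify the numerical bound on $B$, which is a routine computation.
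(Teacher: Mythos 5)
Your proposal matches the paper's own argument exactly: it invokes Theorem~\ref{thm:branching_algorithm} with the autarky-based Monien--Speckenmeyer rule, whose worst branching vector $(1,2)$ gives $\lambda_3 = \tau(1,2) = (1+\sqrt{5})/2 \approx 1.618$ and hence $B = 1/(\log_{\lambda_3} 2 - 1) \approx 2.27$. The computation and the justification that the autarky branch is dominated are both correct, so there is nothing to add.
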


\subsection{Vertex cover}
\label{sec:VC}

The technique we have seen is not unique to satisfiability but extend easily to known graph problems. 
As an example, we now consider the \emph{vertex cover problem}: given a graph $G$ and an integer $k$, 
does there exist a subset $S\in \binom{V(G)}{k}$ such that $\forall e\in E(G), e\cap S\not=\emptyset$? 
The optimization version consists of finding a smallest subset $S$ satisfying the condition. 
We consider exact algorithms, hence the problem is equivalent to the maximum independent set problem (consider $V(G)\setminus S$).
This is naturally related to the previous results on 2-SAT: the vertex cover problem can be cast as finding a minimum-weight satisfying assignment for a monotone 2-CNF formula.

We first briefly recall a standard algorithm for finding a minimum vertex cover in a graph $G$ on $n$ vertices, if one exists, in time $O^*(1.3803^n)$. First note that if the maximum degree of the graph is 2, then the problem can be solved in polynomial time. Otherwise, pick a vertex $v$ of degree at least 3, and return the minimum of $1+VC(G-v)$ and $VC(G-v-N(v))$, where $VC$ are recursive calls, and $N(v)$ is the set of neighbors of $v$ in $G$. The running time $T(n)$ obeys the recurrence $T(n)=T(n-1)+T(n-4)$, solving to the claimed bound.
We can also analyze it with respect to the size $k$ of the sought cover, yielding $T(k)=T(k-1)+T(k-3)$, solving to $1.4656^k$.
In the latter, we do not count the total number of vertices that are processed, but only those that are part of the solution.
Hence we can distinguish the branching number $\lambda$ related to the number of vertices processed and the branching number $\rho$ related to the number of vertices included in the vertex cover (equivalently, the weight of the current partial assignment). In our case, we have $\rho < 1.4656$.

We now consider instances of the vertex cover problem in which we are promised that there are at least $\varepsilon \binom{n}{k}$ vertex covers.
Given a branching algorithm, we can parse its search tree in breadth-first order, by associating with each node the number of vertices included in $S$ so far (that is, the weight of the partial assignment). We define $\Phi_{\ell}$ as the set of nodes with such value $\ell$, and call it the $\ell$th floor. The following lemma is similar to Lemma~\ref{lem:floor}.
\begin{lemma}
$|\Phi_{\ell}|\leq \rho^{\ell}$.
\end{lemma}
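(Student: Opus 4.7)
The plan is to mimic the proof of Lemma~\ref{lem:floor}, but to measure branching progress by cover weight (the number of vertices already placed in $S$) rather than by the number of fixed variables. In the standard branching for vertex cover, one either adds $v$ to $S$ (weight grows by $1$) or adds all neighbors $N(v)$ to $S$, where $|N(v)|\geq 3$ by the choice of $v$ (weight grows by at least $3$). In particular every entry of every branching vector is at least $1$, so the weight $\ell(v)$ strictly increases along any root-to-leaf path of the search tree. Consequently $\Phi_\ell$, the set of search-tree nodes of weight exactly $\ell$, forms an antichain.

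I would then attach to each tree node $v$ the \emph{mass} $\rho^{-\ell(v)}$. Since $\rho$ is by definition the smallest positive root of $\sum_i x^{-b_i}=1$ taken over branching vectors arising in the algorithm, we have $\sum_i \rho^{-b_i}\leq 1$ for every such vector. Hence at every internal node the masses of the children sum to at most the mass of the parent. A straightforward induction on depth then shows that for any antichain $A$ of the search tree, $\sum_{v\in A} \rho^{-\ell(v)}\leq \rho^{0}=1$. Applying this to $A=\Phi_\ell$ yields $|\Phi_\ell|\cdot \rho^{-\ell}\leq 1$, i.e.\ $|\Phi_\ell|\leq \rho^\ell$.

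The only step that needs care is the antichain property of $\Phi_\ell$; once that is in hand, the rest is the usual weighted counting of leaves of a branching tree. I do not foresee a real obstacle, since the fact that each branching entry is strictly positive is exactly what lets a single counter track $\ell$ as the algorithm descends.
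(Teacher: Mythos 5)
Your argument is correct and is essentially the paper's intended proof: the paper gives no explicit argument here, merely noting the lemma is ``similar to Lemma~\ref{lem:floor}'', and the standard justification of that lemma is exactly your weighted counting of an antichain in the branching tree (using that $\sum_i \rho^{-b_i}\leq 1$ holds because $\rho$ is the \emph{largest} of the branching numbers $\tau(b_1,\ldots,b_t)$ and $x\mapsto\sum_i x^{-b_i}$ is decreasing). The only cosmetic slip is calling $\rho$ ``the smallest positive root taken over branching vectors''; it is the maximum over branching vectors of the smallest positive root for each, which is precisely what makes your mass inequality valid for every branching.
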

After generating the $\ell$th floor $\Phi_{\ell}$, there are at most $\rho^{\ell} \binom{n - \ell}{k -\ell}$ remaining covers to check.
If this is less than the total number of solutions of size $k$, we are done.
The following statement gives an upper bound on the number of levels of the tree we need to parse.

\begin{lemma}
\label{lem:height}
  Let $\ell^* := \ln (\frac 1{\varepsilon}) / \ln (\frac{n}{\rho k})$. Then for $k,n>> \ell^*$ and $k\leq n/\rho$, we have 
$$
\rho^{\ell} \binom{n - \ell}{k -\ell} \geq \varepsilon \binom{n}{k} \Rightarrow \ell \leq \ell^*.
$$
\end{lemma}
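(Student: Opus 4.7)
The plan is to reduce the statement to an elementary bound on the ratio of binomial coefficients and then take a logarithm. The key observation is that
\[
\frac{\binom{n-\ell}{k-\ell}}{\binom{n}{k}} = \frac{k(k-1)\cdots(k-\ell+1)}{n(n-1)\cdots(n-\ell+1)} = \prod_{i=0}^{\ell-1} \frac{k-i}{n-i},
\]
so I would first show that each factor is bounded by $k/n$. This follows from the inequality $\frac{k-i}{n-i} \leq \frac{k}{n}$, which holds whenever $k \leq n$ (cross-multiplying, it is equivalent to $k \leq n$, and here we even have $k \leq n/\rho \leq n$). Therefore
\[
\binom{n-\ell}{k-\ell} \leq \left(\frac{k}{n}\right)^{\ell} \binom{n}{k}.
\]

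Next I would substitute this into the hypothesis $\rho^{\ell}\binom{n-\ell}{k-\ell} \geq \varepsilon \binom{n}{k}$, which gives
\[
\left(\frac{\rho k}{n}\right)^{\ell} \geq \varepsilon.
\]
Since the assumption $k \leq n/\rho$ ensures $\rho k / n \leq 1$ (and, for a nontrivial bound, strictly less than $1$ unless the problem is degenerate), the quantity $\ln(n/(\rho k))$ is nonnegative. Taking natural logarithms and flipping the inequality yields
\[
\ell \ln\!\left(\frac{n}{\rho k}\right) \leq \ln\!\left(\frac{1}{\varepsilon}\right),
\]
which is exactly $\ell \leq \ell^*$.

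The only subtle point is handling the degenerate case $\rho k = n$, where the denominator $\ln(n/(\rho k))$ vanishes; the assumption $k, n \gg \ell^*$ is there to rule out that degeneracy and to make $\ell^*$ a meaningful finite quantity, so I would simply mention this and treat the generic case $\rho k < n$. No real obstacle arises — the entire argument is the binomial-ratio bound followed by a logarithm, and the strict inequality $\rho < 2$ plays no role here (it is used elsewhere to guarantee that $\lambda$-type sums converge).
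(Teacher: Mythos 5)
Your proof is correct and follows essentially the same route as the paper's: both reduce the hypothesis to $(\rho k/n)^{\ell} \geq \varepsilon$ by cancelling the binomial coefficients and then take logarithms, using $k \leq n/\rho$ to fix the sign of $\ln(n/(\rho k))$. The one genuine difference is that you bound the ratio termwise via $\frac{k-i}{n-i} \leq \frac{k}{n}$, which is exact, whereas the paper approximates $k!/(k-\ell)! \approx k^{\ell}$ and $n!/(n-\ell)! \approx n^{\ell}$ and therefore leans on the assumption $k, n \gg \ell^*$; your version makes the implication unconditional (modulo the degenerate case $\rho k = n$, which you correctly flag), a mild but real tightening of the argument.
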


\begin{proof}
  \begin{eqnarray*}
    \rho^{\ell} \binom{n - \ell}{k -\ell} & \geq & \varepsilon \binom{n }{k}\\
    \rho^{\ell} \frac{(n - \ell)!}{(k -\ell)! (n-k)!} & \geq & \varepsilon \frac{n!}{k! (n-k)!} \\
    \rho^{\ell} \frac{(n - \ell)!}{(k -\ell)!} & \geq & \varepsilon \frac{n!}{k!} \\
    \rho^{\ell} \frac{k!}{(k - \ell)!} & \geq & \varepsilon \frac{n!}{(n-\ell)!}
\end{eqnarray*}
For $k, n$ sufficiently large, this holds whenever
\begin{eqnarray*}
    \rho^{\ell} k^{\ell} & \geq & \varepsilon n^{\ell} \\
    (\rho k /n)^{\ell} & \geq & \varepsilon \\
    \ell \ln (\rho k / n) & \geq & \ln \varepsilon \\
    \ell & \leq & \ell^* ,
    \end{eqnarray*}
where the last line uses the assumption that $k<n/\rho$.
\end{proof}

For $n$ large enough, Lemma~\ref{lem:height} implies that if $\ell > \ell^*$ then the number of remaining solutions is smaller than the promised number $\varepsilon \binom{n}{k}$, and either we have found one already, or greedily completing any partial solution leads to a solution.
Hence the running time is within a linear factor of $\rho^{\ell^*}$, which simplifies as follows.

\begin{theorem}
Given a Vertex Cover instance composed of a graph $G$ on $n$ vertices, a number $k<n/\rho$, and
an $\varepsilon > 0$ with the guarantee that $G$ has at least $\varepsilon \binom{n}{k}$ vertex covers of size $k$,
one can find such a vertex cover in deterministic time 
$$
O^*\left( \varepsilon^{-\frac{\log \rho}{\log (\frac n {\rho k})}}\right),
$$
where $\rho$ is the branching number of an exact branching algorithm for $k$-vertex cover.
In particular, this holds for $\rho = 1.4656$.
\end{theorem}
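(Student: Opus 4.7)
The plan is to run the breadth-first branching variant of the standard $O^*(\rho^k)$ vertex cover algorithm described earlier, process it level by level according to the weight $\ell$ of the current partial cover, and terminate as soon as the bookkeeping guarantees that a solution has been encountered. Concretely, I would pick any branching rule whose worst-case branching number (counted in the weight of the accumulated partial cover) is $\rho < 1.4656$, e.g. the one that branches on a vertex $v$ of degree $\geq 3$ into ``take $v$'' versus ``take $N(v)$.'' Because at every internal branching node some vertex is added to the partial cover, $\Phi_\ell$ is exactly the level set of weight $\ell$ and the floor lemma gives $|\Phi_\ell|\le\rho^\ell$.

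Next I would combine this with the simple counting bound: any partial cover of weight $\ell$ (i.e.\ any node of $\Phi_\ell$) extends to at most $\binom{n-\ell}{k-\ell}$ vertex covers of size $k$, since at most $k-\ell$ more vertices can be chosen from the remaining $n-\ell$. Summed over the floor, the total number of size-$k$ covers still reachable from $\Phi_\ell$ is at most $\rho^\ell\binom{n-\ell}{k-\ell}$. If we have not yet discovered a satisfying cover by the time we process level $\ell$, then by correctness of the branching every one of the $\ge\varepsilon\binom{n}{k}$ promised covers must be reachable from some node of $\Phi_\ell$. Applying Lemma~\ref{lem:height} with $\ell=\ell^*+1$ then forces $\rho^{\ell}\binom{n-\ell}{k-\ell}<\varepsilon\binom{n}{k}$, a contradiction. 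Hence by level $\ell^*$ at the latest some node of the tree is already a solution (or, equivalently, greedily extending any remaining node of $\Phi_{\ell^*}$ must succeed, because the complementary case is ruled out).

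For the running time, the algorithm processes $\sum_{\ell=0}^{\ell^*}|\Phi_\ell|=O(\rho^{\ell^*})$ tree nodes, each in polynomial time, so the total cost is $O^*(\rho^{\ell^*})$. It remains to plug in $\ell^*=\ln(1/\varepsilon)/\ln(n/(\rho k))$ and rewrite:
\[
\rho^{\ell^*}
=\exp\!\left(\frac{\ln\rho}{\ln(n/(\rho k))}\cdot\ln\tfrac1\varepsilon\right)
=\varepsilon^{-\frac{\log\rho}{\log(n/(\rho k))}},
\]
yielding the claimed bound. Substituting $\rho=1.4656$ gives the stated numerical value. The proof is essentially a bookkeeping argument; the only slightly delicate step is confirming that the hypothesis $k<n/\rho$ (used in Lemma~\ref{lem:height}) and the asymptotic regime ``$n$ large enough'' are compatible with the factorial-to-power approximation in that lemma, which is really the only place where care is required.
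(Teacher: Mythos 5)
Your proposal is correct and follows essentially the same route as the paper: parse the branching tree breadth-first by the weight $\ell$ of the partial cover, bound the covers still reachable from floor $\Phi_\ell$ by $\rho^\ell\binom{n-\ell}{k-\ell}$, invoke Lemma~\ref{lem:height} to cap the depth at $\ell^*$, and simplify $\rho^{\ell^*}$ to the stated exponent. The only substantive step is the height lemma, which you correctly treat as a black box, so nothing further is needed.
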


Note that the running time remains sublinear in $1/\varepsilon$ for all values of $k$ such that
$\frac{\log \rho}{\log (\frac n {\rho k})} < 1 \Leftrightarrow k < n / \rho^2$.
Hence for those values of $k$, and in particular when $k=o(n)$, we have a deterministic algorithm for $k$-vertex cover whose complexity improves on the trivial sampling algorithm.

\section{Randomized algorithms for Sample-2-SAT and for 3-SAT}
\label{sec:rand_alg}

In this section we present our algorithm for Sample-2-SAT with an expected running time of $O\left(\varepsilon^{-\bestconstant}(m+n)\right)$ on $2$-CNF formulas with more than $\varepsilon$ fraction of satisfying assignments. 
The parameter $\varepsilon$ does not need to be a constant and the algorithms can be easily modified so that they do not need to know $\varepsilon$ in advance.
Before stating and proving our main result we consider a warm-up algorithm that gives a weaker bound but already highlights some of the main ideas.
In the end we discuss the complications of generalizing our method to Sample-3-SAT and see how to solve 3-SAT in expected time $O\left(\varepsilon^{-0.940}(m+n)\right)$ using similar techniques as for Sample-2-SAT.

Schmitt and Wanka~\cite{schmitt2013exploiting} have used analogous ideas to approximately count the number of solutions in $k$-CNF formulas.

\subsection{A warm-up algorithm for Sample-2-SAT}
\label{sec:warm-up}

We will start with a warm-up algorithm that we then improve.
Let $F$ be a 2-CNF formula over the variable set $V$ with $n:=|V|$ and with $m$ clauses.
Let $S \subseteq F$ be a greedily chosen maximal set of variable disjoint clauses.
We make the following remarks.
\begin{itemize}
\item Any satisfying full assignment for $F$ must in particular satisfy $S$ and is therefore an extension of one of the $3^{|S|}$ partial assignments to $\vbl(S)$ that satisfy all clauses in $S$.
\item Because of maximality any partial assignment of the form $\alpha : \vbl(S) \rightarrow \{0,1\}$ has the property that $F^{[\alpha]}$ is a ($\leq 1$)-CNF.
\item Counting and sampling of solutions of a ($\leq 1$)-CNF is easily done in linear time.
\end{itemize}

The set $S$ allows us on one hand to do improved rejection sampling and on the other hand to device a branching based sampling. 
More concretely, consider the following two algorithms that use $S$.

\begin{enumerate}
\item Sample uniformly among all full assignments for $F$ that satisfy all the clauses in $S$ until finding one that satisfies $F$.
\item Go through all $3^{|S|}$ partial assignments $\alpha : \vbl(S) \rightarrow \{0,1\}$ that satisfy $S$ and for each $\alpha$ compute $A_{\alpha} := |\sat_{V \setminus \vbl(S)}(F^{[\alpha]})|$, i.e., the number of satisfying assignments in $F^{[\alpha]}$.
Then $A := \sum_{\alpha} A_{\alpha}$ is the number of satisfying assignments in $F$.
Draw one partial assignment $\alpha^*$ at random so that Pr$(\alpha^* = \alpha) = A_{\alpha} / A$.
For the remaining variables choose an assignment $\beta^* : V \setminus \vbl(S) \rightarrow \{0,1\}$ uniformly among all assignments satisfying $F^{[\alpha^*]}$.
Output the full assignment which when restricted to $\vbl(S)$ is $\alpha^*$ and when restricted to $V\setminus\vbl(S)$ is $\beta^*$.
\end{enumerate}

The correctness of the first algorithm is clear since any assignment satisfying $F$ must also satisfy $S$.
One sample can also be drawn in linear time.
Because the clauses of $S$ are variable disjoint, the pool of assignments we are sampling from has $(\frac{3}{4})^{|S|}2^n$ assignments and it contains all the at least $\varepsilon 2^n$ satisfying assignments.
Therefore the probability of one sample being satisfying is at least $(\frac{4}{3})^{|S|}\varepsilon$, implying an expected runtime of $O\left(\varepsilon^{-1}(\frac{3}{4})^{|S|}(m+n)\right)$ for the first algorithm.

We need the second algorithm to balance the first one when $|S|$ is small. 
For the correctness we observe that the partial assignments $\alpha$ partition the solution space in the sense that $A = \sum_\alpha A_{\alpha} = |\sat_V(F)|$ and a simple calculation shows that the output distribution is uniform over $\sat_V(F)$.
With the remarks made before the algorithm description we conclude that the runtime of the second algorithm is $O(3^{|S|}(m+n))$.
If space is a concern, the sampling of $\alpha^*$ can be done in linear space without storing the numbers $A_{\alpha}$ as follows:
Sample a uniform number $r$ from $\{1,\ldots,A\}$ and go through the partial assignments $\alpha$ again in the same order and output the first $\alpha$ for which the total number of assignments counted up to that point reaches at least $r$.

For any given $S$ we can choose the better of the two algorithms which gives an expected runtime guarantee of
\begin{align}
O\left(\max_{|S|} \left\{3^{|S|}, \varepsilon^{-1}\left(\frac{3}{4}\right)^{|S|}\right\} \cdot (m+n)\right) = O\left(\varepsilon^{-\log_4 3} (m+n) \right)
\end{align}
where $\log_4 3 < 0.793$.
Note that we do not need to know $\varepsilon$ in advance to get the same runtime guarantee as we can simulate running both of the algorithms in parallel until one finishes. 

\subsection{A faster algorithm for Sample-2-SAT}

In the warm-up algorithm we used the set $S$ on the one hand to reduce the size of the set of assignments we are sampling from and on the other hand we used it as a small size \emph{hitting set} for the clauses in $F$: every clause in $F$ contained at least one variable from $\vbl(S)$.
To improve we will do two things.
Firstly, we will consider more complicated independent structures that improve on both aspects above, giving us both a smaller size sampling pool and a better hitting set.
Secondly, we notice that it is not necessary to always use an exact hitting set in the counting procedure but an ``almost hitting set'' is enough. 
Namely, if some small set of variables hits almost all clauses we can count the number of solutions to the remaining relatively small $(\leq 2)$-SAT with a good exponential time algorithm for \#2-SAT.

We introduce first some notation.
For $i \in \mathbb{N}$ we call a set of clauses $S$ an \emph{$i$-star} if $|S| = i$ and if there exists a variable $x$ such that for any pair of distinct clauses $C,D \in S$ we have $\{x\} = \vbl(C)\cap \vbl(D)$. 
A \emph{star} is an $i$-star for some $i$.
For $i \geq 2$ we call the variable $x$ the \emph{center} of the star and any other variable is called a \emph{leaf}.
For 1-stars we consider both of the variables as centers and neither of them as leaves.
A star is called \emph{monotone} if the center appears as the same literal in every clause of the star.
We call a set $T$ of exactly three clauses a \emph{triangle} if every 2-element subset of $T$ is a star and $T$ is not itself a star.
Finally, we call a family $\cM$ of CNF formulas \emph{independent} if no two formulas in $\cM$ share common variables.

\begin{theorem}
\label{thm:Sample-2-SAT_epsilon}
Let $F$ be a 2-CNF formula on $n$ variables and $m$ clauses and let $\varepsilon > 0$ be such that $|\sat(F)| \geq \varepsilon 2^n$.
A uniformly random satisfying assignment for $F$ can be found in expected time $O\left(\varepsilon^{-\delta}(m+n)\right)$ where $\delta < \bestconstant$.
\end{theorem}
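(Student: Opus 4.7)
The plan is to extend the warm-up algorithm of Section~\ref{sec:warm-up} along two axes. First, in place of a maximal matching of single clauses, I would greedily extract a maximal independent family $\mathcal{M}$ of the richer \emph{structures} defined just before the theorem---triangles and (monotone or non-monotone) stars---prioritized by their per-variable figure of merit $\log_2 s_M / v_M$, where $s_M = |\sat(M)|$ and $v_M = |\vbl(M)|$. Intuitively, triangles and non-monotone balanced stars ($v=3$, $s=4$, figure $2/3$) are more profitable than monotone stars or isolated clauses (figure $\log_2(3)/2 \approx 0.79$) because they simultaneously shrink the rejection pool and raise the counting yield per controlled variable. Second, I would allow the counting-based sampler to finish with Wahlstr\"om's $O^*(1.238^{n'})$ \#2-SAT algorithm on the residual formula, so that $\vbl(\mathcal{M})$ need not be a full hitting set of $F$.

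Given such an $\mathcal{M}$ with $V := |\vbl(\mathcal{M})|$, the algorithm runs two procedures in parallel and returns the first output:
\begin{enumerate}
\item (Rejection.) Draw uniform samples from the pool of $\bigl(\prod_M s_M\bigr)\,2^{n-V}$ assignments whose restriction to $\vbl(\mathcal{M})$ satisfies $\mathcal{M}$; at least $\varepsilon\,2^n$ of these lie in $\sat(F)$, so the expected cost is $O^*\!\bigl(\varepsilon^{-1}\prod_M s_M/2^{v_M}\bigr)$.
\item (Counting.) Enumerate the $\prod_M s_M$ satisfying partial assignments $\alpha$ of $\mathcal{M}$, invoke Wahlstr\"om's algorithm on each residual $F^{[\alpha]}$ to compute $|\sat(F^{[\alpha]})|$, sample $\alpha^\ast$ with probability proportional to its count, and extend by a uniform solution of $F^{[\alpha^\ast]}$. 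The expected cost is $O^*\!\bigl(\prod_M s_M\cdot 1.238^{n-V}\bigr)$.
\end{enumerate}
Correctness of (2) follows because the $\alpha$'s partition $\sat(F)$, exactly as in the warm-up; correctness of (1) is immediate since $\sat(F)$ lies inside the rejection pool.

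The main obstacle will be the running-time analysis, which proceeds by a case analysis over the structural composition of $\mathcal{M}$. First I would use maximality to show that every clause of $F$ not touching $\vbl(\mathcal{M})$ is variable-disjoint from all other such clauses---otherwise we could enlarge $\mathcal{M}$ by a fresh 2-star or triangle---so the residual is effectively a matching on at most $n-V$ variables and Wahlstr\"om runs in the claimed $1.238^{n-V}$ time. Balancing the two costs in (1) and (2) yields $\log_2 \varepsilon^{-1} = V + (n-V)\log_2(1.238)$, and substituting into (2) expresses the exponent in $\varepsilon^{-1}$ as a function of the per-variable figures of merit of the extracted structures together with the Wahlstr\"om base $\log_2(1.238)$. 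The delicate part is verifying that the worst case over all admissible mixes of triangles, non-monotone stars, monotone $i$-stars, and isolated clauses gives $\delta < \bestconstant$; the improvement beyond the warm-up's $\log_4 3$ comes from triangles and balanced stars when available, and from the Wahlstr\"om overhead $1.238^{n-V}$ otherwise.
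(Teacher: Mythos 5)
Your overall architecture (richer independent structures, rejection sampling versus exhaustive counting, Wahlstr\"om's \#2-SAT algorithm as a subroutine) matches the paper's, but the counting procedure as you specify it is too expensive and the claimed exponent does not follow. You enumerate all $\prod_M s_M$ satisfying partial assignments of $\mathcal{M}$, so each structure contributes a factor $s_M$ to the counting cost against a factor $s_M/2^{v_M}$ saved in the rejection pool; balancing the two gives the exponent $\log_2(s_M)/v_M$ for the worst structure type present. Nothing in your construction prevents the maximal family from consisting entirely of single clauses that cannot be upgraded to triangles or $2$-stars (every clause incident to such a clause may have its other variable already inside $\vbl(\mathcal{M})$), and for a single clause $\log_2(s_M)/v_M=\log_4 3\approx 0.7925$ --- exactly the warm-up bound. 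Growing monotone $i$-stars makes this figure worse, not better, since $\log_2(2^i+1)/(i+1)\to 1$. Your appeal to the Wahlstr\"om overhead $1.238^{\,n-V}$ does not rescue this: by maximality $\vbl(\mathcal{M})$ hits every clause, so the residual after assigning all of $\vbl(\mathcal{M})$ is a $(\le 1)$-CNF and is counted in polynomial time; the bottleneck is the $\prod_M s_M$ enumeration itself. (Your structural claim about clauses ``not touching $\vbl(\mathcal{M})$'' is vacuous for the same reason: there are no such clauses.)

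The idea you are missing is that one should \emph{not} enumerate assignments to all of $\vbl(\mathcal{M})$. The paper starts from a $1$-maximal matching of clauses and proves that only the \emph{centers} of stars (and of non-monotone $2$-stars) can occur in a clause together with a variable outside the family; leaves and triangle variables cannot, on pain of contradicting ($1$-)maximality. Hence the set $W'$ of variables one must branch on contains at most one variable per structure, the enumeration costs $2^{|W'|}$ rather than $\prod_M s_M$, and Wahlstr\"om's algorithm is then run on the \emph{remaining star variables} --- a set whose size is linear in the number of structures and is further reduced by unit propagation whenever a monotone center is set to $0$ --- not on the $n-V$ variables outside the family. It is this per-structure cost of roughly $1+c^{\ell}$ with $c\approx 1.238$, instead of $2^{\ell}+1$, that pushes the exponent below $\log_4 3$; optimizing over the worst-case mix of structure types via a linear program then yields $\delta<0.617$. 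Without the $1$-maximality device and the resulting bound on $W'$, your analysis cannot beat the warm-up.
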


\begin{proof}

Let $V$ be the variable set of $F$ and let $k \geq 2 $ be a constant independent of $\varepsilon$ that we fix later.
We start by constructing a sequence $(\cM_0,\cM_1,\ldots,\cM_k)$ of $k+1$ independent families of formulas where every family consists of subformulas of $F$.

Let $\cM_0$ be any independent 1-maximal family of $1$-stars (clauses) in $F$.
That is, in addition to maximality we require further that there is no clause in the family whose removal would allow the addition of two clauses in its place.
We can find $\cM_0$ with a greedy algorithm in linear time\footnote{This is equivalent to finding a 1-maximal matching in a graph: first find a maximal matching and then find a maximal set of independent augmenting paths of length 3 and augment them.}. 

To construct $\cM_1$ from $\cM_0$, we add clauses of $F$ to the 1-stars of $\cM_0$ greedily to update them into non-monotone 2-stars or triangles while maintaining independence.
As a result $\cM_1$ is an independent family of subformulas of $F$ that consists of 1-stars, non-monotone 2-stars, and triangles and no 1-star can be turned into the other two types by adding clauses of $F$ to it without revoking independence.

For $i = 2,\ldots,k$ we construct $\cM_i$ from $\cM_{i-1}$ by greedily adding clauses of $F$ to the monotone $(i-1)$-stars to turn them into monotone $i$-stars while ensuring independence.
Since $k$ is a constant, and all since greedily adding clauses can be done in linear time, the total time taken to construct the families is $O(m+n)$.
An example of $\cM_4$ can be seen in Figure~\ref{fig:ind_family_construction}.
We describe the structural properties of the families later in the proof.

\begin{figure}[h]
    \centering
    \includegraphics[width=\textwidth]{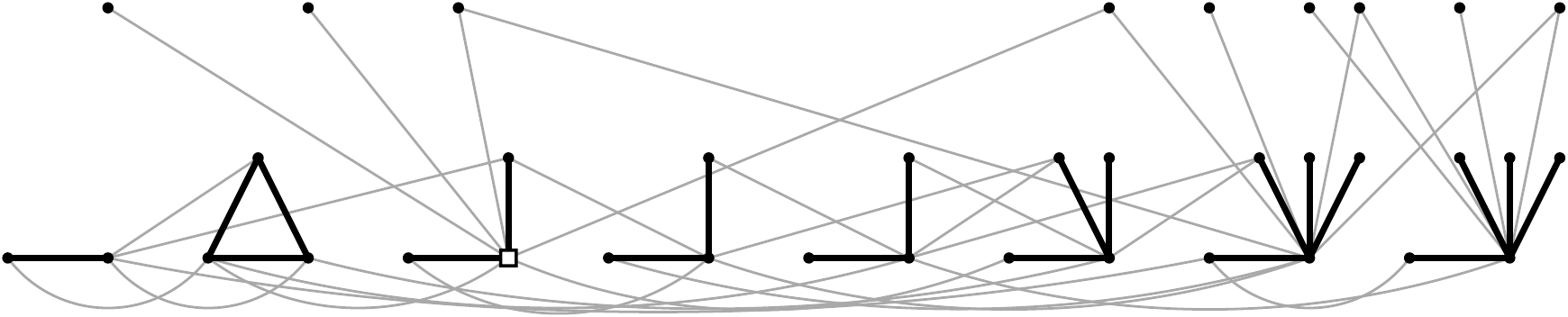}
    \caption{A possible construction of $\cM_4$ for a formula $F$ that is displayed as a graph with the variables as vertices and edges between variables appearing in the same clause. The subformulas of $F$ that make up $\cM_4$ are given by the components defined by the black bold edges. The edges that form up $\cM_0$ are the horizontal black bold edges. There is one non-monotone 2-star in $\cM_4$ and it is denoted by the square center vertex.}
    \label{fig:ind_family_construction}
\end{figure}

Analogously to the warm-up algorithm in the previous section we describe two different algorithms that both make use of the independent families we have constructed and that complement each other in terms of their running times.
The second algorithm describes in fact $k$ different algorithms, determined by the choice of a parameter $\ell \in \{1,\ldots,k\}$. 
For each $i = 1,\ldots,k$ we let $s_i$ denote the number of monotone $i$-stars in $\cM_k$.
By construction the parameter $r_i := \sum_{j=i}^k s_j$ then denotes the number of monotone $i$-stars in $M_i$.
We further let $t$ be the number of triangles and $q$ be the number of non-monotone $2$-stars in $\cM_k$, and therefore in every $\cM_i$ with $i = 1,\ldots,k$.
The two algorithms we consider are:

\begin{enumerate}
\item Sample uniformly among all full assignments for $F$ that satisfy all the clauses in $\cM_k$ until finding one that satisfies $F$.
\item Fix $\ell \in \{1,\ldots k\}$.
Define further the variable set $W := \vbl(\cM_{\ell})$ and let $W' \subseteq W$ be the set of variables of $\cM_{\ell}$ that appear in a clause of $F$ that has exactly one variable of $\cM_{\ell}$ in them.
Go through all $2^{|W'|}$ partial assignments $\alpha : W' \rightarrow \{0,1\}$ and compute $A_\alpha := |\sat_{V\setminus W'}(F^{[\alpha]})|$ by using Wahlstr{\"o}m's \#2-SAT algorithm~\cite{wahlstrom2008tighter}.
Let $A := \sum_{\alpha} A_{\alpha}$ and choose one partial assignment $\alpha^*$ at random so that Pr$(\alpha^* = \alpha) = A_{\alpha} / A$.
For the remaining variables choose an assignment $\beta^* : V \setminus W' \rightarrow \{0,1\}$ uniformly among all assignments satisfying $F^{[\alpha^*]}$. 
This can be done by branching on a variable, using Wahlstr{\"o}m's algorithm to count the number of assignments in the two branches, flipping a biased coin weighed by the counts to decide on the branch and repeating the same on the resulting formula until all variables have been set.
Output the full assignment which when restricted to $W'$ is $\alpha^*$ and when restricted to $V\setminus W'$ is $\beta^*$.
\end{enumerate}

The correctness analysis for both of these two algorithms is essentially the same as in our warm-up in Section~\ref{sec:warm-up} and it remains to discuss the running times.

Starting with the first algorithm we note that the stars and triangles in $\cM_k$ have constant size so the sampling of an assignment can be done in linear time in each iteration.
Out of the $2^{i+1}$ possible assignments to the variables in any monotone $i$-star it can be easily checked that $2^i + 1$ satisfy all the clauses in the star. 
Both for a triangle or for a non-monotone 2-star there are 8 possible assignments out of which at most 4 are satisfying.
Therefore from the independence of $\cM_k$ we know that there are at most
\begin{align}
\label{eqn:number_of_assignments}
2^{-t-q}\prod_{i=1}^{k} \left(\frac{2^i+1}{2^{i+1}}\right)^{s_i}2^n
\end{align}
full assignments to the variables in $F$ that satisfy everything in $\cM_k$.
Since $F$ has at least $\varepsilon 2^n$ satisfying assignments and the size of the universe we are sampling from is given by \eqref{eqn:number_of_assignments} we conclude that the first algorithm takes expected time
\begin{align}
\label{eqn:2-sat_more_complicated_sampling}
O\left(\varepsilon^{-1}2^{-t-q}\prod_{i=1}^{k} \left(\frac{2^{i}+1}{2^{i+1}}\right)^{s_i}(m+n)\right)
\end{align}
until returning a uniform satisfying assignment.

Consider now the runtime of the second algorithm.
This is the more intricate part of the analysis and we will make use of the structure of the families that we have set up.
It may be helpful to consider Figure~\ref{fig:ind_family_construction}.
Let $F' \in \cM_{\ell}$ be one of the subformulas in the family $\cM_{\ell}$. 
We claim that $|\vbl(F')\cap W'| \leq 1$ and that if $\vbl(F')\cap W' = \{x\}$, then $F'$ is either an $\ell$-star or a non-monotone 2-star and $x$ is the center of the star.
Towards showing the claim let $\{u,v\}$ be a clause with $\vbl(u) \in W$ and $\vbl(v) \in V\setminus W$ so that $\{u,v\}$ is a witness for $\vbl(u) \in W'$.
If $\vbl(u)$ was a leaf of a star of $\cM_{\ell}$, then we could have made $\cM_0$ larger which would contradict the 1-maximality when $\vbl(u)\in \vbl(\cM_0)$ or just maximality in the case of $\vbl(u)\not\in \vbl(\cM_0)$.
For the same reasons the variable $\vbl(u)$ can not appear in any triangle.
For any $j < \ell$ the variable $\vbl(u)$ can also not be the center of a $j$-star as otherwise we would have updated that star into a monotone ($j+1$)-star when constructing $\cM_{j+1}$ or we would have created a non-monotone $2$-star already in the beginning while constructing $\cM_1$.
The options for $\vbl(u)$ that remain are the centers of $\ell$-stars and the centers of the non-monotone 2-stars.
In the case of $\ell = 1$ we still have to argue that at most one center may appear in $W'$.
If both of the centers appeared in $W'$, it would either violate the 1-maximality of $\cM_0$ or we could have turned the 1-star into a triangle which proves the claim.
Therefore we have the bound $|W'| \leq r_{\ell} + q$.

We can observe from the argumentation above that if $\alpha : W' \rightarrow \{0,1\}$ is a partial assignment for $F$, then doing unit clause reduction on the formula $F^{[\alpha]}$ results in a $2$-CNF formula over some variable set $W_{\alpha} \subseteq W \setminus W'$. 
Computing $A_{\alpha}$ with Wahlstr{\"o}m's algorithm takes time $O(c^{|W_{\alpha}|})$ \cite{wahlstrom2008tighter}.
Therefore we want to bound $|W_{\alpha}|$ as tightly as possible.
If the assignment $\alpha$ sets the center literal of a monotone $\ell$-star to 0, then the values of the $\ell$ remaining variables in the star are determined and will be set to their required values with unit clause reduction.
For a non-monotone 2-star either assignment of the center will force the value of one of the leaves and one leaf stays undetermined.
If $\alpha$ sets $i$ of the $r_{\ell}$ literals in the centers of the monotone $\ell$-stars to 0 we get the bound
\begin{align}
\label{eqn:bound_on_remaining_variables}
|W_{\alpha}| \leq q + 3t + \ell(r_{\ell}-i)+\sum_{j = 1}^{\ell-1}(j+1)s_j.
\end{align}
Among the assignments $\alpha$ that we consider there are $\binom{r_{\ell}}{i}2^{q}$ different ones that set $i$ of the central literals of the monotone $\ell$-stars to $0$.
Using formula~\eqref{eqn:bound_on_remaining_variables} we conclude that the runtime cost of going over the assignments $\alpha$ and computing the numbers $A_{\alpha}$ is
\begin{align}
\label{eqn:2-sat_more_complicated_listing_alg}
&O\left(\sum_{i=0}^{r_{\ell}} \binom{r_{\ell}}{i} 2^{q}\cdot c^{q+3t+\ell(r_{\ell}-i)+\sum_{j = 1}^{\ell-1}(j+1)s_j}\cdot(m+n)\right) \notag \\ 
= \: &O\left(c^{3t}(2c)^{q}\left(1+c^{\ell}\right)^{r_{\ell}}\left[\prod_{j=1}^{\ell-1}c^{(j+1)s_j}\right]\cdot(m+n)\right)
\end{align}
where we used the binomial theorem.
We can again use the same trick as in the warm-up algorithm to sample $\alpha^*$ without storing all the values of $A_{\alpha}$ to keep the space requirement linear.
The running time of finding $\beta^*$ with the branching procedure takes time $O(c^{|W_{\alpha^*}|}|W_{\alpha^*}| + (m+n))$ which is subsumed by \eqref{eqn:2-sat_more_complicated_listing_alg}.

We have now one algorithm with running time given by \eqref{eqn:2-sat_more_complicated_sampling} and for any $\ell \in \{1,\ldots,k\}$ we have an algorithm with running time given by \eqref{eqn:2-sat_more_complicated_listing_alg}.
Given the sequence $(\cM_1,\ldots,\cM_k)$ we choose the algorithm with the best runtime.
To find a worst case upper bound on the runtime we look for the runtime in the form
\begin{align}
\label{eqn:worst_case_runtime}
O\left(\varepsilon^{-\delta}(m+n)\right)
\end{align}
and compute the nonnegative parameters $s_1,\ldots,s_k; t$ and $q$ that maximize the minimum of the different runtimes.
Write $\sigma_i := s_i / \log \frac{1}{\varepsilon}, \tau := t / \log \frac{1}{\varepsilon}, \rho := q / \log \frac{1}{\varepsilon}$.
By taking logarithms of the runtimes \eqref{eqn:2-sat_more_complicated_sampling}, \eqref{eqn:2-sat_more_complicated_listing_alg} and \eqref{eqn:worst_case_runtime} we can write the problem of finding $\delta$ and the worst case parameters $\sigma_i, \tau, \rho$ as the linear program
\begin{equation*}
\begin{array}{rl}
\displaystyle \max_{\delta,\,\sigma_i,\tau,\rho} \quad\delta \hfill & \\
 \text{s.t.}  \hfill  - \tau - \rho + \sum_{i = 1}^k \sigma_i  \log \left(\frac{2^i+1}{2^{i+1}}\right) &\geq \:\, \delta - 1 \\ 
                   3\tau\log c + \rho\log(2c)+ \sum_{i = 1}^{\ell-1} \sigma_i  \log \left(c^{i+1}\right) + \sum_{i = \ell}^k \sigma_i  \log \left(1+c^{\ell}\right)  &\geq \:\, \delta \quad \textnormal{for all } \ell = 1,\ldots,k \\
                 \delta,\sigma_i,\tau,\rho &\geq \:\, 0 \quad  \textnormal{for all } i = 1,\ldots,k \enspace .
\end{array}
\end{equation*}
It turns out that we only need to consider $k=7$ due to the fact that $c^{j+1} > 1+c^j$ in the integers when $j \geq 7$ which implies that the running time for higher values of $k$ no longer improves.
For $k=7$ the linear program has in the optimum $\delta < 0.61618$.
The approximate values of the other variables in the optimum are $\sigma_1 \approx 0.131, \sigma_2 \approx 0.127,, \sigma_3 \approx 0.111, \sigma_4 \approx 0.084,\sigma_5 \approx 0.051, \sigma_6 \approx 0.022, \sigma_7 \approx 0.004$ and exact values of $\tau = 0$ and $\rho = 0$.
This finishes the proof.
\end{proof}

We attempted to improve the analysis by constructing families that do not consist only of stars and triangles but the runtimes we achieved were not better.
In some sense stars seem particularly good for the efficient use of Wahlstr{\"o}m's \#2-SAT algorithm as a subroutine because the set $W'$ is not too big.
We also note that while we could consider adding the option of choosing $\ell = 0$ in the second algorithm, it is easily verified that choosing $\ell = 1$ instead gives a better performance.

\subsection{A randomized algorithm for 3-SAT}

One could say that our Sample-2-SAT algorithm works because counting and sampling solutions for a $(\leq 1)$-CNF is trivial.
Direct generalizations of our method to Sample-3-SAT do not work because the same is not true for $(\leq 2)$-CNF formulas.
Instead of solving Sample-3-SAT we apply our method for 3-SAT.

\begin{proposition}
\label{prop:simple_randomized_3-SAT}
Let $F$ be a $3$-CNF formula on $n$ variables and $m$ clauses and let $\varepsilon > 0$ be such that $|\sat(F)| \geq \varepsilon 2^n$.
A satisfying assignment for $F$ can be found in expected time $O\left(\varepsilon^{-\log_{8} 7}(m+n)\right)$.
\end{proposition}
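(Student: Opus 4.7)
The plan is to mimic the warm-up algorithm from Section~\ref{sec:warm-up}, replacing the matching of $2$-clauses by a maximal variable-disjoint set of $3$-clauses and exploiting the fact that, for $3$-SAT, the task is only to \emph{find} a satisfying assignment, not to sample one uniformly. This means that after conditioning on a partial assignment we may run the polynomial-time algorithm of Aspvall, Plass and Tarjan on the residual formula, rather than needing an exponential-time counting subroutine.

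More concretely, I would first greedily compute a maximal family $S\subseteq F$ of pairwise variable-disjoint clauses of $F$. Since each clause of $F$ has exactly three literals, the maximality of $S$ forces $\vbl(C)\cap \vbl(S)\neq\emptyset$ for every $C\in F$, so for any partial assignment $\alpha:\vbl(S)\to\{0,1\}$ the formula $F^{[\alpha]}$ is a $(\leq 2)$-CNF over $V\setminus\vbl(S)$. Next I would run the following two algorithms in parallel and output whichever finishes first.
\begin{enumerate}
\item Sample a full assignment uniformly at random among the $(7/8)^{|S|}2^n$ assignments satisfying all clauses of $S$, and repeat until $F$ is satisfied. Every satisfying assignment of $F$ lies in this pool, so the success probability is at least $\varepsilon\,(8/7)^{|S|}$ per trial, giving an expected runtime of $O\bigl(\varepsilon^{-1}(7/8)^{|S|}(m+n)\bigr)$.
\item Enumerate all $7^{|S|}$ partial assignments $\alpha:\vbl(S)\to\{0,1\}$ that satisfy $S$ and, for each, test whether $F^{[\alpha]}$ is satisfiable in linear time using the classical $2$-SAT algorithm of Aspvall et al.~\cite{aspvall1979linear}; if yes, extend $\alpha$ accordingly and return. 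This runs in time $O\bigl(7^{|S|}(m+n)\bigr)$.
\end{enumerate}

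The worst-case value of $|S|$ is the one equalising the two bounds: solving $\varepsilon^{-1}(7/8)^{|S|}=7^{|S|}$ yields $|S|=\log_8(1/\varepsilon)$, and both bounds then evaluate to $\varepsilon^{-\log_8 7}$, giving the claimed expected runtime $O\bigl(\varepsilon^{-\log_8 7}(m+n)\bigr)$. As in the warm-up, $\varepsilon$ need not be known in advance because the two algorithms can simply be interleaved.

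There is no real obstacle here; the only points to verify are the structural consequence of maximality (that $F^{[\alpha]}$ is a $(\leq 2)$-CNF, which follows because otherwise a clause of $F$ with three fresh variables could be adjoined to $S$), and the simple balancing calculation above. This is precisely why the method does not extend to \emph{sampling} from $3$-CNF formulas: Algorithm~2 would have to count solutions of $F^{[\alpha]}$ rather than merely decide its satisfiability, and no efficient analogue of the $2$-SAT subroutine is available in the counting setting.
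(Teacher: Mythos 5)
Your proposal is correct and follows essentially the same route as the paper's proof: a maximal variable-disjoint set $S$ of clauses, rejection sampling over the $(7/8)^{|S|}2^n$ assignments satisfying $S$, balanced against enumerating the $7^{|S|}$ partial assignments and solving the residual $(\leq 2)$-CNF with the linear-time algorithm of Aspvall et al., with the two bounds equalised at $|S|=\log_8(1/\varepsilon)$. The extra details you supply (why maximality forces the residual formula to be a $(\leq 2)$-CNF, and why the method does not extend to sampling) are accurate and consistent with the paper's discussion.
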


\begin{proof}
Let $S$ be a maximal set of variable disjoint clauses in $F$.
Either sample among those assignments that satisfy $S$ until finding a satisfying assignment or go through all the $7^{|S|}$ partial assignments to $\vbl(S)$ and check the satisfiability of the resulting $(\leq 2)$-CNF.

Checking through the partial assignments takes time $O(7^{|S|}\cdot(m+n))$ because each of the $7^{|S|}$ instances of $(\leq 2)$-SAT can be solved in linear time \cite{aspvall1979linear}. The rejections sampling takes expected time $O\left(\varepsilon^{-1}\left(\frac{7}{8}\right)^{|S|}(m+n)\right)$
because we are sampling from a pool of $\left(\frac{7}{8}\right)^{|S|}2^n$ assignments that contain all the at least $\varepsilon 2^n$ many satisfying assignments. 
Choosing always the better of the two methods, depending on $|S|$, gives a worst case running time of $O\left(\varepsilon^{-\log_{8} 7}(m+n)\right)$.
\end{proof}

Proposition~\ref{prop:simple_randomized_3-SAT} gives an algorithm that works for any $\varepsilon$, but there exist better algorithms for certain ranges of $\varepsilon$.
The PPSZ algorithm for 3-SAT runs in expected time $O(1.308^n)$ \cite{hertli20143} which is faster in the case that $\varepsilon = O(0.750^n)$. 
It is also possible to analyze Sch{\"o}ning's algorithm~\cite{schoning2002probabilistic} for 3-SAT to get a dependence on $\varepsilon$ by using an isoperimetric inequality for the hypercube by Frankl and F{\"u}redi~\cite{frankl1981short}.
The computation can be found in Appendix~\ref{appendix:Schoening}.
The runtime guarantee that results is $O\left(\left(\frac{4}{3}\cdot 2^{-H^{-1}(\delta)}\right)^n\right)$ in expectation where $\delta$ is the solution to $\varepsilon = 2^{(\delta-1)n}$ and where $H : (0,1/2] \rightarrow (0,1]$ is the bijective \emph{binary entropy function} defined by $H(x) = -x\log_2(x) - (1-x)\log_2(1-x)$.
The range where Sch{\"o}ning's algorithm is better than Proposition~\ref{prop:simple_randomized_3-SAT} is when $\varepsilon = O(0.929^n)$.

\section{Conclusion}
\label{sec:conclusion}

An interesting open problem is whether Sample-3-SAT can be solved time $O^*\left(\varepsilon^{-\delta}\right)$ for some $\delta < 1$.
Similarly, can we achieve such a running time for 3-SAT with a deterministic algorithm? 


We also believe that parameterizing by the number of solutions should be a fruitful approach to other problems besides satisfiability or vertex cover.

\subparagraph*{Acknowledgments} 
We would like to thank Noga Alon and József Solymosi for discussions on the problem. We also thank the reviewers of IPEC 2017 for valuable remarks that improved the exposition.

\bibliographystyle{plainurl}
\bibliography{references}

\begin{thebibliography}{10}

\bibitem{aspvall1979linear}
Bengt Aspvall, Michael~F Plass, and Robert~Endre Tarjan.
\newblock A linear-time algorithm for testing the truth of certain quantified
  boolean formulas.
\newblock {\em Information Processing Letters}, 8(3):121--123, 1979.

\bibitem{DeETT10}
Anindya De, Omid Etesami, Luca Trevisan, and Madhur Tulsiani.
\newblock Improved pseudorandom generators for depth 2 circuits.
\newblock In {\em Approximation, Randomization, and Combinatorial Optimization.
  Algorithms and Techniques, 13th International Workshop, {APPROX}, and 14th
  International Workshop, {RANDOM}}, pages 504--517, 2010.

\bibitem{frankl1981short}
Peter Frankl and Zolt{\'a}n F{\"u}redi.
\newblock A short proof for a theorem of {H}arper about hamming-spheres.
\newblock {\em Discrete Mathematics}, 34(3):311--313, 1981.

\bibitem{hertli20143}
Timon Hertli.
\newblock 3-{SAT} faster and simpler---unique-{SAT} bounds for {PPSZ} hold in
  general.
\newblock {\em SIAM Journal on Computing}, 43(2):718--729, 2014.

\bibitem{hirsch1998fast}
Edward~A Hirsch.
\newblock A fast deterministic algorithm for formulas that have many satisfying
  assignments.
\newblock {\em Logic Journal of IGPL}, 6(1):59--71, 1998.

\bibitem{jerrum2003counting}
Mark~R Jerrum.
\newblock {\em Counting, sampling and integrating: algorithms and complexity}.
\newblock Springer Science \& Business Media, 2003.

\bibitem{jerrum1986random}
Mark~R Jerrum, Leslie~G Valiant, and Vijay~V Vazirani.
\newblock Random generation of combinatorial structures from a uniform
  distribution.
\newblock {\em Theoretical Computer Science}, 43:169--188, 1986.

\bibitem{kane2013short}
Daniel~M Kane and Osamu Watanabe.
\newblock A short implicant of cnfs with relatively many satisfying
  assignments.
\newblock In {\em Electronic Colloquium on Computational Complexity (ECCC)},
  volume~20, page 176, 2013.

\bibitem{macwilliams1977theory}
Florence~Jessie MacWilliams and Neil James~Alexander Sloane.
\newblock {\em The theory of error correcting codes}, volume~16.
\newblock Elsevier, 1977.

\bibitem{meel2016constrained}
Kuldeep~S Meel, Moshe~Y Vardi, Supratik Chakraborty, Daniel~J Fremont, Sanjit~A
  Seshia, Dror Fried, Alexander Ivrii, and Sharad Malik.
\newblock Constrained sampling and counting: Universal hashing meets sat
  solving.
\newblock In {\em AAAI Workshop: Beyond NP}, 2016.

\bibitem{monien1985solving}
Burkhard Monien and Ewald Speckenmeyer.
\newblock Solving satisfiability in less than $2^n$ steps.
\newblock {\em Discrete Applied Mathematics}, 10(3):287--295, 1985.

\bibitem{naveh2007constraint}
Yehuda Naveh, Michal Rimon, Itai Jaeger, Yoav Katz, Michael Vinov, Eitan
  s~Marcu, and Gil Shurek.
\newblock Constraint-based random stimuli generation for hardware verification.
\newblock {\em AI magazine}, 28(3):13, 2007.

\bibitem{sang2005performing}
Tian Sang, Paul Beame, and Henry~A Kautz.
\newblock Performing bayesian inference by weighted model counting.
\newblock In {\em AAAI}, volume~5, pages 475--481, 2005.

\bibitem{schmitt2013exploiting}
Manuel Schmitt and Rolf Wanka.
\newblock Exploiting independent subformulas: A faster approximation scheme for
  \#k-{SAT}.
\newblock {\em Information Processing Letters}, 113(9):337--344, 2013.

\bibitem{schoning2002probabilistic}
Uwe Sch{\"o}ning.
\newblock A probabilistic algorithm for k-{SAT} based on limited local search
  and restart.
\newblock {\em Algorithmica}, 32(4):615--623, 2002.

\bibitem{ST16}
Rocco Servedio and Li-Yang Tan.
\newblock Deterministic search for {CNF} satisfying assignments in almost
  polynomial time.
\newblock Unpublished manuscript, 2016.

\bibitem{trevisan2004note}
Luca Trevisan.
\newblock A note on approximate counting for k-{DNF}.
\newblock In {\em Approximation, Randomization, and Combinatorial Optimization.
  Algorithms and Techniques}, pages 417--425. Springer, 2004.

\bibitem{wahlstrom2008tighter}
Magnus Wahlstr{\"o}m.
\newblock A tighter bound for counting max-weight solutions to {2SAT}
  instances.
\newblock In {\em International Workshop on Parameterized and Exact
  Computation}, pages 202--213. Springer, 2008.

\end{thebibliography}

\newpage
\appendix

\section{Analysis of Sch{\"o}ning's algorithm with many assignments}
\label{appendix:Schoening}

Sch{\"o}ning's algorithm for $k$-SAT is as follows. 
Start by picking random assignment and as long as there are unsatisfied clauses pick one and flip the value of a random literal in the clause. 
Keep on flipping $3n$ times and if no satisfied assignment has been found, restart the process from a new random assignment. 
Sch{\"o}ning\cite{schoning2002probabilistic} showed that if $\alpha^{*}$ is some satisfying assignment of $F$ and if we choose $\alpha$ as our initial random assignment, then the probability that we find a satisfying assignment within the $3n$ steps is at least 
\begin{align}
\label{eqn:schoening_success_probability}
(1/(k-1))^{ d_{H}(\alpha^{*},\alpha)}
\end{align}
where $d_H(\alpha^{*},\alpha)$ is the Hamming distance of $\alpha^{*}$ and $\alpha$.
As the distance of a random assignment to a fixed satisfying assignment is binomially distributed, the probability of finding a satisfying assignment in one iteration before restarting is at least
\begin{align*}
\frac{1}{2^n}\sum_{j=0}^{n}\binom{n}{i}\left(\frac{1}{k-1}\right)^j = \left( \frac{k}{2(k-1)} \right)^n \enspace .
\end{align*}
Therefore, with the restarts, we see that Sch{\"o}ning's algorithm runs in expected time
\begin{align*}
O^{*}\left(\left( \frac{2(k-1)}{k} \right)^n\right)
\end{align*}
when $F$ is satisfiable. 
When there are many satisfying assignments we need to be able to compute or approximate the distribution of the distance of the initial random assignment to its closest satisfying assignment. 
To do this we will use an isoperimetric inequality for the hypercube which allows us to reduce the analysis to the case where the satisfying assignments are arranged very regularly.
We use the formulation from Frankl and F{\"u}redi~\cite{frankl1981short}. 
A \emph{Hamming ball} in $\{0,1\}^n$ with center $\alpha \in \{0,1\}^n$ is a set $B \subseteq \{0,1\}^n$ such that for some $r$ we have that
\begin{align*}
\{\beta \in \{0,1\}^n \: | \: d_{H}(\beta,\alpha) \leq r\} \subseteq B \subseteq \{\beta \in \{0,1\}^n \: | \: d_{H}(\beta,\alpha) \leq r+1\}.
\end{align*}
Note that this possibly less common definition of a Hamming ball allows  for Hamming balls of any cardinality.
We call $r$ the \emph{radius} of the Hamming ball.
The cardinality of a Hamming ball of radius $r = \rho n$ for a constant $\rho \in [0,1]$ is $O^*(2^{H(\rho)n})$ where $H : [0,1] \rightarrow [0,1]$ is the binary entropy function defined for $x \in (0,1)$ by $H(x) := -x\log_2(x) -(1-x)\log_2(1-x)$ and $H(0) = H(1) = 0$.
See Chapter 10 \S 11 of the book by MacWilliams and Sloane~\cite{macwilliams1977theory} for a proof.
We define the inverse $H^{-1} : [0,1] \rightarrow [0,0.5]$ by restricting the domain of $H$ into $[0,0.5]$ on which $H$ is injective.

\begin{theorem}[Frankl and F{\"u}redi~\cite{frankl1981short}]
\label{thm:isoperimetric}
Let $A,B \subseteq \{0,1\}^n$ be sets and define
\begin{align*}
d_{H}(A,B) := \min\{d_H(\alpha,\beta) \: | \: \alpha \in A, \beta \in B\}.
\end{align*}
There are two Hamming balls $A_0$ with center $\textbf{0}$ and $B$ with center $\textbf{1}$ such that $|A| = |A_0|$ and $|B| = |B_0|$ and $d_H(A_0,B_0) \geq d_{H}(A,B)$.
\end{theorem}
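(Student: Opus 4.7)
The plan is to derive this isoperimetric inequality from Harper's vertex isoperimetric inequality for the Hamming cube: for any $S \subseteq \{0,1\}^n$ and any $r \geq 0$, the $r$-neighborhood $N_r(S) := \{y : d_H(y, S) \leq r\}$ satisfies $|N_r(S)| \geq |N_r(S_0)|$, where $S_0$ is the Hamming ball of size $|S|$ centered at $\textbf{0}$.

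Given $A, B$ with $d_H(A, B) = d$, the hypothesis is equivalent to $B \cap N_{d-1}(A) = \emptyset$, so $|B| \leq 2^n - |N_{d-1}(A)|$. Let $A_0$ be the Hamming ball of size $|A|$ around $\textbf{0}$. Harper's inequality gives $|N_{d-1}(A_0)| \leq |N_{d-1}(A)|$, whence $|B| \leq 2^n - |N_{d-1}(A_0)|$. Since $N_{d-1}(A_0)$ is itself a Hamming ball around $\textbf{0}$, its complement is a Hamming ball around $\textbf{1}$. Choosing $B_0$ to be the Hamming ball of size $|B|$ centered at $\textbf{1}$, the nested structure of concentric balls around $\textbf{1}$ forces $B_0 \subseteq \{0,1\}^n \setminus N_{d-1}(A_0)$, which yields $d_H(A_0, B_0) \geq d$ as required. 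Here I am using that the paper's definition of Hamming ball has the flexibility to pick any subset of the outer shell, so $B_0$ can be selected to sit inside the complement.

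If Harper's inequality is not taken as a black box, I would prove it by a standard compression argument. For each coordinate $i$, define the downward shift $\sigma_i$: for every pair $\{x, x \oplus e_i\}$ with $x_i = 1$ containing exactly one element of the current set, replace that element by $x \oplus e_i$. A short local case analysis shows $|\sigma_i S| = |S|$ and $|N_r(\sigma_i S)| \leq |N_r(S)|$. Iterating over coordinates to stability produces a downset $S^*$. The main obstacle is the final symmetrization from a shift-stable downset to a genuine Hamming ball, since the maximum-weight elements of $S^*$ need not all lie on a single weight level. Resolving this requires either a swap argument across pairs of coordinates that equalizes the top level, or an induction on $n$ that splits along a chosen coordinate and reassembles, in each case carefully tracking the effect of the rearrangement on $|N_r|$.
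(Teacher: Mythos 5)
The paper does not prove this statement at all---it is imported verbatim from Frankl and F\"uredi, so there is no internal proof to compare against. Your reduction to Harper's vertex isoperimetric inequality is correct and is a legitimate alternative to Frankl--F\"uredi's own self-contained compression argument (which shifts the two families $A$ and $B$ simultaneously). The key chain $B\cap N_{d-1}(A)=\emptyset \Rightarrow |B|\le 2^n-|N_{d-1}(A)|\le 2^n-|N_{d-1}(A_0)|$, followed by nesting $B_0$ inside the complement of $N_{d-1}(A_0)$, is sound; the paper's permissive definition of a Hamming ball (full ball of radius $r$ plus an arbitrary portion of the sphere of radius $r+1$) is exactly what makes the last step work, and you correctly note this. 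Two small points of care: for Harper's inequality to apply you should take $A_0$ to be the specific extremal set (an initial segment of the simplicial order, which is a Hamming ball in the paper's generalized sense), not an arbitrary ball of that cardinality; and the $N_{d-1}$ version of Harper follows from the $N_1$ version by iteration because the neighborhood of an initial segment is again an initial segment---worth saying explicitly. Your closing sketch of Harper's theorem itself is incomplete, as you admit (the passage from a shift-stable downset to a genuine ball is the real work), but since Harper's theorem is classical it is entirely reasonable to invoke it as a black box; with that convention your proof is complete. What your route buys is brevity and reuse of a standard tool; what the Frankl--F\"uredi route buys is a short self-contained argument that never needs the full strength of Harper's theorem.
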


We are now ready to prove the following theorem which is most likely known to experts but we find it noteworthy to write a proof.

\begin{theorem}
\label{thm:schoening_analysis}
Given a $k$-SAT instance $F$ on $n$ variables and a $\delta \in [0,H(1/k)]$ with the guarantee that $|\sat(F)| \geq 2^{\delta n}$, one can find a satisfying assignment for $F$ in expected time
\begin{align*}
    O^*\left(\left(\frac{2(k-1)}{k}\right)^{n} \cdot (k-1)^{-H^{-1}(\delta)n}\right).
\end{align*}
\end{theorem}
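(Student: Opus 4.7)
My plan is to combine Sch\"oning's per-iteration success bound \eqref{eqn:schoening_success_probability} with Theorem~\ref{thm:isoperimetric}, averaging over a uniformly random starting assignment. Starting at $\alpha \in \{0,1\}^{n}$, a single restart of Sch\"oning's walk finds some $\alpha^{*}\in \sat(F)$ with probability at least $(1/(k-1))^{d_{H}(\alpha,\sat(F))}$, where $d_{H}(\alpha,\sat(F)) := \min_{\alpha^{*}\in\sat(F)} d_{H}(\alpha,\alpha^{*})$. Averaging over a uniform $\alpha$, the per-restart success probability satisfies
\begin{equation*}
p \;\geq\; \frac{1}{2^{n}}\sum_{r=0}^{n} N_{r}\left(\frac{1}{k-1}\right)^{r}, \qquad N_{r} := |\{\alpha : d_{H}(\alpha,\sat(F)) = r\}|,
\end{equation*}
so it suffices to lower bound $p$ and then bound the expected running time by $O^{*}(1/p)$.

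The second step is to use Theorem~\ref{thm:isoperimetric} to reduce to the worst case in which $\sat(F)$ is itself a Hamming ball. For each $r \geq 0$ I apply the theorem to $A := \sat(F)$ and $B_{r} := \{\alpha : d_{H}(\alpha,\sat(F)) > r\}$, noting that $d_{H}(A,B_{r}) \geq r+1$. It produces antipodal Hamming balls of the same cardinalities with radii that I call $s_{A}$ and $s_{B}$, satisfying $|\sat(F)| \geq \binom{n}{\leq s_{A}}$ and $|B_{r}| \leq \binom{n}{\leq s_{B}}$ with $n - s_{A} - s_{B} \geq r+1$. The first inequality forces $s_{A} \geq H^{-1}(\delta)n - o(n)$, and combining the distance constraint with the symmetry $\binom{n}{j} = \binom{n}{n-j}$ gives the layer-by-layer bound $|\{\alpha : d_{H}(\alpha,\sat(F)) \leq r\}| \geq \binom{n}{\leq s_{A} + r}$ for every $r$. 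Substituting the successive differences into the formula for $p$ yields
\begin{equation*}
p \;\geq\; \frac{(k-1)^{s_{A}}}{2^{n}} \sum_{j=s_{A}}^{n}\binom{n}{j}\left(\frac{1}{k-1}\right)^{j}.
\end{equation*}

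The third step is to estimate this tail. Since $\sum_{j=0}^{n}\binom{n}{j}(1/(k-1))^{j} = (k/(k-1))^{n}$, the normalized summand is the mass function of a $\mathrm{Bin}(n,1/k)$ variable, concentrated around $j \approx n/k$. The hypothesis $\delta \leq H(1/k)$ translates exactly to $s_{A} \leq n/k + o(n)$, so a routine Chernoff or Stirling estimate shows that the tail $\sum_{j \geq s_{A}}$ still retains at least a $1/\mathrm{poly}(n)$ fraction of $(k/(k-1))^{n}$. Combining with the previous display yields $p = \Omega^{*}\bigl((k-1)^{s_{A}}(k/(2(k-1)))^{n}\bigr)$, and taking $1/p$ gives exactly the claimed expected running time.

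The main obstacle I expect is the second step: Theorem~\ref{thm:isoperimetric} is a single-radius statement, so I need to apply it uniformly in $r$ to recover the whole distance profile from $\sat(F)$, and I must absorb a $2^{o(n)}$ slack into the $O^{*}$ because $|\sat(F)|$ need not match the cardinality of an exact integer-radius Hamming ball (the paper's Hamming balls may carry a partially filled outer sphere). The assumption $\delta \leq H(1/k)$ becomes essential in the third step: outside this range $s_{A}$ overshoots the mode of the binomial tail, the cancellation between $(k-1)^{s_{A}}$ and the tail fails, and the improvement over Sch\"oning's worst-case bound disappears, which explains why the theorem is stated only on $[0, H(1/k)]$.
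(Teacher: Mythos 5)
Your proof is correct and reaches the stated bound, but it takes a genuinely different route from the paper's. The paper applies the Frankl--F\"uredi theorem exactly once, to $A=\sat(F)$ and the single set $B=\{\beta : d_H(\beta,A)\geq \sigma n+1\}$ for one threshold $\sigma n$; it then lower-bounds the per-restart success probability by $\Pr(d_H(\alpha,A)\leq\sigma n)\cdot(1/(k-1))^{\sigma n}$, i.e.\ it charges every good starting point the \emph{worst} distance $\sigma n$, and finally optimizes over $\sigma$, arriving at $\sigma=\max\{1/k-H^{-1}(\delta),0\}$. You instead apply the isoperimetric theorem at every radius $r$ to recover the full distance profile $N_r$, lower-bound the partial sums $\sum_{r'\leq r}N_{r'}$ by $\binom{n}{\leq s_A+r}$ (up to an off-by-one you rightly absorb into $O^*$), and then push these bounds through the weighted sum $\sum_r N_r (k-1)^{-r}$ via summation by parts (this step silently uses that the weights $(k-1)^{-r}$ are non-increasing --- worth stating). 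This replaces the paper's explicit optimization over $\sigma$ by the implicit one hidden in the $\mathrm{Bin}(n,1/k)$ tail estimate, and it mirrors Sch\"oning's original binomial computation with the distribution shifted by $s_A$; the paper's version is shorter, yours is arguably more systematic and loses nothing, since both evaluate to $\bigl(\tfrac{2(k-1)}{k}\bigr)^n(k-1)^{-H^{-1}(\delta)n}$. One small inaccuracy: the hypothesis $\delta\leq H(1/k)$ does \emph{not} translate to $s_A\leq n/k+o(n)$ --- the cardinality bound only gives $s_A\geq H^{-1}(\delta)n-o(n)$, and $s_A$ may well exceed $n/k$ if $F$ has many more than $2^{\delta n}$ solutions. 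The fix is immediate: replace $s_A$ by $\min\{s_A,\lfloor n/k\rfloor\}$ throughout (all your lower bounds remain valid with a smaller radius, and $H^{-1}(\delta)\leq 1/k$ guarantees this truncated value is still at least $H^{-1}(\delta)n-o(n)$), so the tail always contains the mode of the binomial and the claimed bound follows.
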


\begin{proof}
Let $\sigma = \sigma(\delta)$ be a constant that we decide later on, let $A$ be the set of at least $2^{\delta n}$ assignments that satisfy $F$, let $B = \{\beta \in \{0,1\}^n \:|\: d_H(\alpha,\beta) \geq \sigma n + 1 \: \forall \alpha \in A\}$, and define $A_0$ and $B_0$ with respect to $A$ and $B$ as in Theorem~\ref{thm:isoperimetric}.
Because $|A_0| = |A| \geq 2^{\delta n}$, the Hamming ball $A_0$ has radius at least $H^{-1}(\delta) n + O(\log n)$.
Define $\rho := H^{-1}(\delta)$.
We want to analyze the probability that a u.a.r. assignment $\alpha \in \{0,1\}^n$ is at most at a distance of $\sigma n$ from some point in $A$. Using basic properties of distances between two Hamming balls, that $d_{H}(A_0,B_0) \geq d_H(A,B) = \sigma n + 1$ and that $A_0$ has radius $\rho n + O(\log n)$ we can compute:
\begin{align}
&\text{Pr}(d_{H}(\alpha,A) \leq \sigma n) \nonumber\\
&= \text{Pr}(\alpha \not\in B) \nonumber \\
&= \text{Pr}(\alpha \not\in B_0) \nonumber \\
&\geq \text{Pr}(d_{H}(\alpha,A_0) \leq d_H(A_0,B_0) - 1) \nonumber \\
&\geq \text{Pr}(d_{H}(\alpha,A_0) \leq \sigma n) \nonumber \\
&= \text{Pr}(d_H(\textbf{0},\alpha) \leq (\sigma + \rho)n + O(\log n)) \nonumber \\
&= \Omega^*(2^{(H(\sigma + \rho)-1)n}). \label{eqn:probability_close}
\end{align}

The probability that a random assignment is at most at a distance $\sigma n$ of $A$ is given by \eqref{eqn:probability_close} and the probability of finding a satisfying assignment when starting from such an assignment is by \eqref{eqn:schoening_success_probability} at least $(1/(k-1))^{\sigma n}$.
The inverse of the product of these two probabilities is
\begin{align*}
O^{*}\left(2^{(1-H(\sigma + \rho))n}(k-1)^{\sigma n}\right)
\end{align*}
which is a bound on the expected number of times we need a restart in Sch{\"o}ning's algorithm before finding a satisfying assignment.
We can still choose $\sigma$ and the best choice is to define $\sigma :=\max\{1/k - \rho,0\} = \max\{1/k - H^{-1}(\delta),0\}$ which makes the expected running time equal to
\begin{align}
\label{eqn:runtime_our_alg}
\begin{cases}
    O^*\left(2^{(1-H(1/k))n}(k-1)^{ (1/k- H^{-1}(\delta))n}\right)     & \quad \text{if } \delta \leq H(1/k) \\
    O^*\left(2^{(1-\delta)n}\right)  & \quad \text{otherwise.}\\
  \end{cases}
\end{align}
It is easy to check that $2^{(1-H(1/k))n}(k-1)^{n/k} = \left(\frac{2(k-1)}{k}\right)^{n}$ which finishes the proof.
\end{proof}

\end{document}